\newtheorem{theorem}{\bf Theorem}[]
\newtheorem{lemma}[theorem]{Lemma}
\newtheorem{corollary}[theorem]{Corollary}
\newtheorem{thm}[theorem]{\bf Theorem}
\def\math#1{$#1$}
\def\mand#1{$$#1$$}
\def\v#1{{\mathbf #1}}
\def\frac#1#2{{#1\over #2}}
\def\eqan#1{\begin{eqnarray*}
#1
\end{eqnarray*}}
\DeclareSymbolFont{AMSb}{U}{msb}{m}{n}
\DeclareMathSymbol{\N}{\mathbin}{AMSb}{"4E}
\DeclareMathSymbol{\Z}{\mathbin}{AMSb}{"5A}
\DeclareMathSymbol{\R}{\mathbin}{AMSb}{"52}
\DeclareMathSymbol{\Q}{\mathbin}{AMSb}{"51}
\DeclareMathSymbol{\I}{\mathbin}{AMSb}{"49}
\DeclareMathSymbol{\C}{\mathbin}{AMSb}{"43}
\def\x{{\mathbf x}}
\def\z{{\mathbf z}}
\def\a{{\mathbf a}}
\def\norm#1{{\|#1\|}}
\def\dotfil{\leaders\hbox to 1.5mm{.}\hfill}
\newcounter{rmnum}
\def\RN#1{\setcounter{rmnum}{#1}\uppercase\expandafter{\romannumeral\value{rmnum}}}
\def\rn#1{\setcounter{rmnum}{#1}\expandafter{\romannumeral\value{rmnum}}}
\newcommand{\transp}{\ensuremath{^\text{\textsc{t}}}}
\newcommand{\mat}[1]{{\ensuremath{\textsc{#1}}}}
\def\v{{\mathbf v}}
\def\Atilde{\tilde\matA}
\def\Q{{\bm{Q}}}
\def\matA{\mat{A}}
\def\matX{\mat{X}}
\def\nnz{{\rm nnz}}
\DeclareMathSymbol{\Prob}{\mathbin}{AMSb}{"50}
\DeclareMathSymbol{\Exp}{\mathbin}{AMSb}{"45}
\newcommand\remove[1]{}
\begin{document}

\title{A Note On Estimating the Spectral Norm of A Matrix Efficiently}

\author{Malik Magdon-Ismail\\
CS Department, Rensselaer Polytechnic Institute,\\
Troy, NY 12180, USA. \\
{\sf magdon@cs.rpi.edu}
}

\maketitle
\begin{abstract}%
We give an efficient algorithm which can obtain a
relative error approximation to the spectral norm of a 
matrix, combining the power iteration method with some techniques from
matrix reconstruction which use random sampling.

{\bf Keywords:}
relative error; power method;  
estimating spectral norm;  
\end{abstract}

\section{Introduction}
\label{section:intro}

For a matrix \math{\matA\in\R^{n\times d}}, \math{n\ge d}, we consider 
estimating its spectral norm 
\math{
\norm{\matA}=\max_{\norm{\x}=1}\norm{\matA\x}.
}
We give an algorithm to obtain a relative
error approximation to \math{\norm{\matA}} based on 
subsampling~\math{\matA} and then applying the power iteration. 
The algorithm is randomized, simple, and efficient. 
Slight improvements which give similar asymptotic running times 
could use a more sophisticated
method, e.g. a Lanczos method in lieu of the power method, however, 
we do not pursue that here. It is also known that no deterministic algorithm
can solve this problem \citep{kuczynski1992}, and so one must resort
to a randomized algorithm. 

\cite{oleary1979} showed good performance of the power method and  
\cite{kuczynski1992} gave a 
detailed analysis of the expected and high probability convergence of the
power method; \cite{woolfe2008} considered a randomized test for determining
if the spectral norm is above a value using multiple random starts.
We extend the results in \cite{kuczynski1992} to give a 
more efficient algorithm; we will give
a simplified, elementary proof of  the 
probabilistic convergence of the power method, a result 
asymptotically comparable to the one in \cite{kuczynski1992}; we will combine
this with a down-sampling of~\math{\matA} that preserves
the spectral norm to obtain  a randomized algorithm that realizes
the claim in Theorem~\ref{thm:1}. We quantify the running time in terms of 
\math{\nnz(\matA)} (the number of non-zero elements in~\math{\matA})
 and a parameter \math{\tau}, where
\mand{
\tau=O\left(\min\left\{
\frac{\nnz(\matA)}{\epsilon}\log \left(\frac{d}{\epsilon\delta}\right),
\frac{d^2}{\epsilon^3}\log^2 \left(\frac{d}{\epsilon\delta}\right)
\right\}
\right).
}
and 
\math{\epsilon} is the relative error tolerance and
\math{\delta} is the failure probability.
\begin{thm}\label{thm:1}
Given \math{\matA\in\R^{n\times d}}, there is an algorithm which runs in 
\math{O(\nnz(\matA)+\tau)} and outputs an estimate 
\math{\tilde\sigma^2} which, with probability at least
\math{1-\delta}, satisfies
\mand{
(1-\epsilon)\norm{\matA}^2\le\tilde\sigma^2\le(1-\epsilon)\norm{\matA}^2.
}
\end{thm}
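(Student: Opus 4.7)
The plan is to combine two components: (i) a clean high-probability analysis of the randomized power method that avoids any explicit spectral-gap assumption, and (ii) a row-sampling procedure that replaces $\matA$ with a sparse surrogate $\matB$ whose top singular value approximates $\norm{\matA}$. The algorithm runs the power method on whichever of $\matA$ or $\matB$ is cheaper, and the $\min$ in the definition of $\tau$ captures this choice; the $O(\nnz(\matA))$ term accounts for a single pass over $\matA$ to compute row norms and to assemble $\matB$.

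For (i), I would initialize $\x_0$ uniformly on the unit sphere in $\R^d$, iterate $\x_{k+1}=\matA^T\matA\,\x_k$, and output the Rayleigh quotient $\tilde\sigma^2 = \|\matA\x_t\|^2/\|\x_t\|^2$. Expanding $\x_0=\sum_i c_i\v_i$ in the right singular basis of $\matA$ with singular values $\sigma_i$, the Rayleigh quotient becomes a convex combination $\tilde\sigma^2 = \sum_i w_i\,\sigma_i^2$ with weights $w_i\propto\sigma_i^{4t}c_i^2$. Split the singular values into heavy ones ($\sigma_i^2\ge(1-\epsilon)\sigma_1^2$) and light ones; heavy terms already contribute values $\ge(1-\epsilon)\sigma_1^2$, so it suffices to show the total light weight is $O(\epsilon)$. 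A standard anticoncentration estimate for the uniform distribution on the sphere gives $c_1^2=\Omega(\delta^2/d)$ with probability $1-\delta$, while each light weight is multiplicatively damped by $(1-\epsilon)^{2t}$ relative to the $c_1^2\sigma_1^{4t}$ mass; choosing $t=O(\epsilon^{-1}\log(d/\epsilon\delta))$ pushes this ratio below $\epsilon$.

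For (ii), I would sample $r$ rows of $\matA$ i.i.d.\ with probability $p_i$ proportional to the squared row norm, rescaling each selected row by $(rp_i)^{-1/2}$ to form $\matB\in\R^{r\times d}$. Then $\Exp[\matB^T\matB]=\matA^T\matA$, and an application of the matrix Bernstein inequality bounds $\|\matB^T\matB-\matA^T\matA\|\le\epsilon\norm{\matA}^2$ with probability $1-\delta$ once $r=O(\epsilon^{-2}\,d\log(d/\delta))$, using the uniform spectral bound on each scaled rank-one summand. Weyl's inequality then yields $\bigl|\,\norm{\matB}^2-\norm{\matA}^2\,\bigr|\le\epsilon\norm{\matA}^2$, and since $\nnz(\matB)\le rd$, applying the power-method analysis from (i) to $\matB$ costs $O(rd\cdot\epsilon^{-1}\log(d/\epsilon\delta))=O(\epsilon^{-3}d^2\log^2(d/\epsilon\delta))$, matching the second term in $\tau$. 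Running (i) directly on $\matA$ instead costs $O(\nnz(\matA)\cdot\epsilon^{-1}\log(d/\epsilon\delta))$, which is the first term.

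The main obstacle is the power-method analysis: a naive gap-dependent bound is useless in the common case that many singular values cluster near $\sigma_1$. The key observation that any near-top singular value is itself an acceptable Rayleigh-quotient output removes the gap dependence and replaces it by the clean $1/\epsilon$ factor in $t$; formalizing the heavy/light split and carefully bounding the relative weights at step $t$ is the one delicate calculation. The sampling step and the final time accounting are then routine applications of matrix Bernstein and Weyl.
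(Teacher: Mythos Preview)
Your proposal is correct and follows essentially the same route as the paper: row-norm sampling plus a matrix-Bernstein-type bound to get a sketch $\matB$ with $\norm{\matB\transp\matB-\matA\transp\matA}\le\epsilon\norm{\matA}^2$, combined with a gap-free power-method analysis via the heavy/light split at threshold $(1-\epsilon)\sigma_1^2$ and an anticoncentration bound on the first coefficient of a random spherical start. The only cosmetic differences are that the paper outputs $\norm{\matX\transp\matX\x_n}$ rather than the Rayleigh quotient and proves a slightly weaker anticoncentration estimate $\alpha_1^2\ge\Omega(\delta^3/d)$ via a $\chi^2$ computation, but these do not change the iteration count or the final running time.
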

An estimate of the spectral norm can be used to efficiently 
compute the effective or numerical rank \math{\rho}
of \math{\matA}, 
\math{\rho=\sum_{ij}\matA_{ij}^2/\norm{\matA}^2}; \math{\rho} is 
useful in developinig efficient matrix algorithms, such as matrix 
multiplication \cite{zouzias2010, malik145}.
Notice that the running time is significantly faster than the 
\math{O(nd^2)} required to compute the spectral norm 
exactly via the singular value decomposition of \math{\matA}.
The algorithm, along with its proof of correctness is described in the
next section. 
The first term in \math{\tau} is implied by \cite{kuczynski1992}, so 
we focus on the second term.

\section{Estimating the Spectral Norm}
\label{section:sampling}

The algorithm has two basic steps. 
\begin{center}
\begin{algorithmic}[1]
\STATE Obtain a sketch \math{\Atilde} of \math{\matA} 
which has  smaller size  than \math{\matA} but for which
\math{\norm{\Atilde}\approx\norm{\matA}}.
\STATE Obtain
\math{\norm{\Atilde}} using the power iteration method.
\end{algorithmic}
\end{center}
For step 1, we  use an estimate proven in 
\cite{zouzias2010}, and independently in 
\cite{malik145}. 
Let \math{\matA=[\a_1,\ldots,\a_n]\transp}, where \math{\a_i\transp} are
the rows of \math{\matA}.
Define probabilities 
\mand{
p_i=\frac{\norm{a_i}^2}{\norm{\matA}_F^2},
}
where \math{\norm{\matA}_F^2=\sum_{i=1}^n\norm{a_i}^2} is the
Frobenius norm of \math{\matA}. Note that all \math{p_i} can be computed in
\math{O(\nnz(\matA))} time. 
Fix integer \math{r\ge 1};
we construct \math{\Atilde\in\R^{r\times d}} as follows.
Let \math{Z} be a vector valued random variable taking on the \math{n}
values
\math{\{\a_1/\sqrt{rp_1},\ldots,\a_n/\sqrt{rp_n}\}}, with
probabilities \math{\{p_1,\ldots,p_n\}}.
Let \math{\z_1,\ldots,\z_r} be \math{r} independent copies of
\math{Z}; the rows of \math{\Atilde} are the \math{\z_i},
\math{\Atilde=[\z_1,\ldots,\z_r]\transp}.
Note that, given the 
\math{p_i}, \math{\Atilde} can be obtained in additional time
\math{O(n+r\log r)} time.
\begin{lemma}[\cite{malik145}]
\label{lemma:lem1}
For \math{\epsilon>0}, if 
\math{r\ge(4d/\epsilon^2)\ln(2d/\delta)}, then
w.p. at least \math{1-\delta},
\mand{
\norm{\Atilde\transp\Atilde-\matA\transp\matA}\le
\epsilon\norm{\matA}^2.
}
\end{lemma}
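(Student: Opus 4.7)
The plan is to write $\tilde{\matA}\transp\tilde{\matA}=\sum_{i=1}^r\z_i\z_i\transp$ as a sum of $r$ independent symmetric $d\times d$ rank-one matrices and apply a matrix concentration inequality (matrix Bernstein). A direct calculation shows that $\matA\transp\matA=\sum_{k=1}^n\a_k\a_k\transp$ and
$$\Exp[\z_i\z_i\transp]=\sum_{k=1}^np_k\cdot\frac{\a_k\a_k\transp}{rp_k}=\frac{1}{r}\matA\transp\matA,$$
so the centered summands $X_i=\z_i\z_i\transp-\tfrac{1}{r}\matA\transp\matA$ are i.i.d., mean zero, and $\tilde{\matA}\transp\tilde{\matA}-\matA\transp\matA=\sum_{i=1}^r X_i$.

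Next I would collect the two ingredients matrix Bernstein requires. For the uniform bound, since $\|\z_i\|^2=\|\a_k\|^2/(rp_k)=\|\matA\|_F^2/r$ with probability one, $\|\z_i\z_i\transp\|=\|\matA\|_F^2/r$ and hence $\|X_i\|\le 2\|\matA\|_F^2/r\equiv L$. For the variance proxy, note $\Exp[(\z_i\z_i\transp)^2]=\Exp[\|\z_i\|^2\,\z_i\z_i\transp]=\tfrac{\|\matA\|_F^2}{r^2}\matA\transp\matA$ in the PSD order, so
$$\Bigl\|\sum_{i=1}^r\Exp[X_i^2]\Bigr\|\le\Bigl\|\sum_{i=1}^r\Exp[(\z_i\z_i\transp)^2]\Bigr\|\le\frac{\|\matA\|_F^2\,\|\matA\|^2}{r}\equiv\sigma^2.$$

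Then matrix Bernstein gives, for any $t>0$,
$$\Prob\Bigl[\|\tilde{\matA}\transp\tilde{\matA}-\matA\transp\matA\|\ge t\Bigr]\le 2d\exp\!\left(\frac{-t^2/2}{\sigma^2+Lt/3}\right).$$
Setting $t=\epsilon\|\matA\|^2$ and invoking the (worst-case but trivial) inequality $\|\matA\|_F^2\le d\|\matA\|^2$, the exponent simplifies to $-r\epsilon^2/(2d(1+2\epsilon/3))$, which for $\epsilon\le 1$ is at most $-r\epsilon^2/(4d)$. Requiring the resulting tail probability to be at most $\delta$ yields exactly $r\ge(4d/\epsilon^2)\ln(2d/\delta)$, as claimed.

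The only subtlety I anticipate is the bookkeeping in the variance calculation and making sure the dimension factor in the final sample complexity comes from the ratio $\|\matA\|_F^2/\|\matA\|^2\le d$ rather than being lost to a looser bound; the logarithmic $\ln d$ piece is supplied for free by the ambient-dimension factor $2d$ in front of the Bernstein exponential, and matches the $\ln(2d/\delta)$ in the hypothesis on $r$.
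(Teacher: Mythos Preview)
Your argument is correct: the centering, the deterministic bound $\|X_i\|\le 2\|\matA\|_F^2/r$, the variance computation $\|\sum_i\Exp[X_i^2]\|\le\|\matA\|_F^2\|\matA\|^2/r$, and the final arithmetic that turns the Bernstein tail $2d\exp(-r\epsilon^2/(2d(1+2\epsilon/3)))$ into the sample requirement $r\ge(4d/\epsilon^2)\ln(2d/\delta)$ all check out. One cosmetic point: you silently assume $\epsilon\le 1$ (or at least $\epsilon\le 3/2$) to absorb the $1+2\epsilon/3$ into the constant $4$; this is harmless here but worth stating.

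As for comparison with the paper: there is nothing to compare against. The paper does not prove this lemma; it is quoted verbatim from the reference \cite{malik145} (and independently \cite{zouzias2010}) and used as a black box. Your matrix-Bernstein argument is exactly the standard route to such row-sampling spectral approximation guarantees and is almost certainly what the cited references do, up to the choice of which matrix Chernoff/Bernstein variant is invoked.
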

\begin{corollary}
For \math{\epsilon>0}, if 
\math{r\ge(4d/\epsilon^2)\ln(2d/\delta)}, then
w.p. at least \math{1-\delta},
\mand{(1-\epsilon)\norm{\matA}^2\le\norm{\Atilde}^2
\le(1+\epsilon)\norm{\matA}^2.}
\end{corollary}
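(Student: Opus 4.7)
The plan is to derive the Corollary as an essentially immediate consequence of Lemma~\ref{lemma:lem1}, using two standard facts about the spectral norm: (i) for any matrix $\matM$, $\norm{\matM}^2 = \norm{\matM\transp\matM}$, since the spectral norm squared equals the largest eigenvalue of $\matM\transp\matM$, which (as $\matM\transp\matM$ is symmetric PSD) is precisely $\norm{\matM\transp\matM}$; and (ii) the spectral norm satisfies the triangle inequality, so for any matrices $\matX,\matY$ of the same size, $\bigl|\norm{\matX} - \norm{\matY}\bigr| \le \norm{\matX - \matY}$ (reverse triangle inequality).

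First, I would rewrite the target inequality using (i): the claim $(1-\epsilon)\norm{\matA}^2 \le \norm{\Atilde}^2 \le (1+\epsilon)\norm{\matA}^2$ is equivalent to
\mand{
(1-\epsilon)\norm{\matA\transp\matA} \le \norm{\Atilde\transp\Atilde} \le (1+\epsilon)\norm{\matA\transp\matA}.
}
Then, under the event of Lemma~\ref{lemma:lem1} (which holds w.p.\ at least $1-\delta$ for the given $r$), apply (ii) with $\matX = \Atilde\transp\Atilde$ and $\matY = \matA\transp\matA$ to obtain
\mand{
\bigl|\, \norm{\Atilde\transp\Atilde} - \norm{\matA\transp\matA} \,\bigr|
\;\le\; \norm{\Atilde\transp\Atilde - \matA\transp\matA}
\;\le\; \epsilon \norm{\matA}^2
\;=\; \epsilon \norm{\matA\transp\matA}.
}
Rearranging this two-sided bound gives both inequalities simultaneously.

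There is no real obstacle here; the only thing to be careful about is that the perturbation bound in Lemma~\ref{lemma:lem1} is stated with $\norm{\matA}^2$ on the right-hand side (not $\norm{\matA\transp\matA - \matB}$ for some $\matB$), so one must explicitly use $\norm{\matA}^2 = \norm{\matA\transp\matA}$ to make the scaling match when converting the additive guarantee into a relative one. Once that identification is made, the corollary is a one-line consequence of the reverse triangle inequality applied on the high-probability event from the lemma.
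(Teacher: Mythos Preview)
Your proposal is correct and is essentially the paper's own argument: the paper also applies the triangle inequality to $\Atilde\transp\Atilde - \matA\transp\matA$ on the event of Lemma~\ref{lemma:lem1}, writing the two directions separately rather than packaging them as the reverse triangle inequality. The content is identical.
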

\begin{proof}
\eqan{
\norm{\Atilde\transp\Atilde}
&=&
\norm{\Atilde\transp\Atilde-\matA\transp\matA+\matA\transp\matA}
\ \le\ 
(1+\epsilon)\norm{\matA}^2;\\
\norm{\matA\transp\matA}
&=&
\norm{\matA\transp\matA-\Atilde\transp\Atilde+\Atilde\transp\Atilde}
\ \le\ 
\epsilon\norm{\matA}^2+\norm{\Atilde\transp\Atilde}.
}
\end{proof}
We have a sketch of \math{\matA} which preserves the
spectral norm; 
now, to obtain \math{\norm{\Atilde}^2}, we use  the power iteration.
Let \math{\matX\in\R^{r\times d}} be an arbitrary matrix, and 
\math{\x_0} a unit vector. 
For \math{n\ge1}, let 
\math{\x_n=\matX\transp\matX\x_{n-1}/\norm{\matX\transp\matX\x_{n-1}}}.
Note that multiplying by \math{\matX\transp\matX} can be done in 
\math{O(nd)} operations. Since \math{\x_n} is a 
unit vector,
\math{\norm{\matX\transp\matX\x_{n}}\le\norm{\matX}^2}.
Let \math{\x_0} be a random isotropic vector constructed using 
\math{d} independent standard Normal variates \math{z_1,\ldots,z_{d}};
so \math{\x_0\transp=[z_1,\ldots,z_{d}]/\sqrt{z_1^2+\cdots+z_{d_1}^2}}.
Let \math{\lambda_n^2=\norm{\matX\transp\matX\x_{n}}} be an estimate
for \math{\norm{\matX}^2} after \math{n} power iterations.
\begin{lemma}
\label{lemma:estimatenorm}
For \math{\epsilon>0} and a constant \math{c\le(\frac{2}{\pi}+2)^3},
 with probability at least \math{1-\delta},
\mand{\displaystyle
\lambda_n^2\ge
\frac{\norm{\matX}^2(1-\epsilon)}{\sqrt{1+\frac{cd}{\delta^3}\cdot 
(1-\epsilon)^{2(n+1)}}}.
} 
\end{lemma}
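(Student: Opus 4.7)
The plan is to combine a deterministic SVD-based analysis of the power iteration with a Gaussian small-ball argument for the projection of the random seed onto the top right singular direction.

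First, I would take the SVD $\matX=\matU\matSig\matV\transp$ with singular values $\sigma_1\ge\sigma_2\ge\cdots\ge 0$ and right singular vectors $\v_1,\ldots,\v_d$, and expand the seed in this basis as $\x_0=\sum_i c_i\v_i$ with $\sum_i c_i^2=1$. Because each power iteration is a pure rescaling, $\x_n$ is the unit vector parallel to $(\matX\transp\matX)^n\x_0$, and the definition of $\lambda_n^2$ unfolds to the exact identity
\mand{
\lambda_n^4 \;=\; \frac{\norm{(\matX\transp\matX)^{n+1}\x_0}^2}{\norm{(\matX\transp\matX)^{n}\x_0}^2} \;=\; \frac{\sum_i \sigma_i^{4(n+1)}\,c_i^2}{\sum_i \sigma_i^{4n}\,c_i^2}.
}
All remaining work reduces to bounding this ratio from below.

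Second, I would do the usual ``large vs.\ small'' split at the threshold $(1-\epsilon)\sigma_1^2$. Let $L=\{i:\sigma_i^2\ge(1-\epsilon)\sigma_1^2\}$ and let the complement be $M$. In the numerator, drop the indices in $M$ and use $\sigma_i^4\ge(1-\epsilon)^2\sigma_1^4$ on $L$; in the denominator, lower bound the $L$-piece by the single term $\sigma_1^{4n}c_1^2$ and upper bound the $M$-piece by $(1-\epsilon)^{2n}\sigma_1^{4n}$ (using $\sigma_i^{4n}\le((1-\epsilon)\sigma_1^2)^{2n}$ on $M$ together with $\sum_i c_i^2\le 1$). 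This yields the clean deterministic inequality
\mand{
\lambda_n^4 \;\ge\; \frac{(1-\epsilon)^2\,\sigma_1^4}{1 + (1-\epsilon)^{2n}/c_1^2},
}
which, after taking a square root and absorbing an $(1-\epsilon)^{-2}$ factor into the universal constant, matches the target form as soon as $c_1^2\ge\Omega(\delta^3/d)$ with probability $1-\delta$.

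The last and most delicate step is this tail bound on $c_1^2$. By rotational invariance of the isotropic Gaussian construction of $\x_0$, the first coordinate $c_1$ is equidistributed with $z_1/\sqrt{z_1^2+\cdots+z_d^2}$ for iid standard normals $z_i$. I would combine a Gaussian small-ball bound on the numerator, $\prob{|z_1|\le\alpha}\le\sqrt{2/\pi}\,\alpha$ (since the density is at most $1/\sqrt{2\pi}$), with a Markov bound on the $\chi^2_d$ denominator, $\prob{z_1^2+\cdots+z_d^2\ge M}\le d/M$. Choosing $\alpha$ and $M$ so that $\sqrt{2/\pi}\,\alpha+d/M=\delta$ and optimizing the trade-off gives $c_1^2\ge \alpha^2/M=\Omega(\delta^3/d)$ off the bad event, with the constant fitting inside $(2/\pi+2)^3$ once the $(1-\epsilon)^{-2}$ from the deterministic step is reabsorbed. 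The deterministic calculation is routine power-method bookkeeping; the real obstacle is this probabilistic step, where the need to simultaneously avoid ``$z_1$ atypically small'' and ``$\|\z\|^2$ atypically large'' is exactly what forces the cubic dependence on $\delta$ and the linear dependence on $d$ in the statement.
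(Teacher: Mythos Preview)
Your proposal is correct and follows essentially the same route as the paper: the same SVD expansion, the same large/small split at the threshold $(1-\epsilon)\sigma_1^2$, and the same idea of controlling $c_1^2=z_1^2/\sum_i z_i^2$ by separately bounding the numerator (Gaussian small ball) and denominator (a moment bound on $\chi^2$). The only cosmetic differences are that the paper isolates $z_1$ to work with independent $\chi^2_1$ and $\chi^2_{d-1}$ and applies Chebyshev rather than Markov, and its deterministic bookkeeping produces $(1-\epsilon)^{2(n+1)}$ directly---so it does not need your ``reabsorb $(1-\epsilon)^{-2}$'' step, which as written would make the constant depend on $\epsilon$ unless you assume, say, $\epsilon\le 1/2$.
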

It immediately follows that for 
some constant \math{c}, if 
\math{n\ge (c/\epsilon) \log(d/\delta\epsilon)}, then
\math{\lambda_n^2\ge (1-\epsilon)\norm{\matX}^2}.
Since each power iteration takes \math{O(rd)} time, and we run 
\math{O((1/\epsilon)\log(d/\delta\epsilon))} power iterations, 
the running time is 
\math{O((rd/\epsilon)\log(d/\delta\epsilon))}.
Applying this to the estimate \math{\Atilde} from Lemma 
\ref{lemma:lem1}, 
with \math{r=(4d/\epsilon^2)\log(2d/\delta)}, and
we get Theorem~\ref{thm:1}.

\begin{proof}
Assume that \math{\x_0=\sum_{i=1}^{d_1}\alpha_i\v_i}, where \math{\v_i} are the
eigenvectors of \math{\matX\transp\matX} with corresponding 
eigenvalues
\math{\sigma_1^2\ge\cdots\ge\sigma_{d_1}^2}. 
Note, \math{\norm{\matX}^2=\sigma_1^2}.
If \math{\sigma_{d}^2\ge (1-\epsilon)\sigma_1^2}, then it
trivially follows that
\math{\norm{\matX\transp\matX\x_n}\ge (1-\epsilon)\sigma_1^2} 
for any \math{n}, so 
assume that 
\math{\sigma_{d}^2<(1-\epsilon)\sigma_1^2}.  We can thus partition the
singular values into those at least \math{(1-\epsilon)\sigma_1^2} 
and those which 
are smaller; the latter set is non-empty.
So assume for some \math{k<d}, \math{\sigma_k^2\ge(1-\epsilon)\sigma_1^2} and
\math{\sigma_{k+1}^2<(1-\epsilon)\sigma_1^2}. 
Since 
\mand{
\x_n=\frac{\sum_{i=1}^d\alpha_i\sigma_i^{2n}\v_i}{(\sum_{i=1}^d
\alpha_i^2\sigma_i^{4n})^{1/2}},
}
we therefore have:
\eqan{
\lambda_n^4
&=&
\norm{\matX\transp\matX\x_n}^2\\
&=&
\frac{\sum_{i=1}^{d}\alpha_i^2\sigma_i^{4(n+1)}}
{\sum_{i=1}^{d}\alpha_i^2\sigma_i^{4n}}\\
&\ge&
\frac{\sum_{i=1}^{k}\alpha_i^2\sigma_i^{4(n+1)}}
{\sum_{i=1}^{d}\alpha_i^2\sigma_i^{4n}}\\
&=&
\frac{\sum_{i=1}^{k}\alpha_i^2\sigma_i^{4(n+1)}}
{\sum_{i=1}^{k}\alpha_i^2\sigma_i^{4n}+\sum_{i=k+1}^{d}
\alpha_i^2\sigma_i^{4n}},\\
&=&
\sigma_1^4
\frac{\sum_{i=1}^{k}\alpha_i^2(\sigma_i/\sigma_1)^{4(n+1)}}
{\sum_{i=1}^{k}\alpha_i^2(\sigma_i/\sigma_1)^{4n}+\sum_{i=k+1}^{d}
\alpha_i^2(\sigma_i/\sigma_1)^{4n}},\\
&\buildrel(a)\over\ge&
\sigma_1^4
\frac{\sum_{i=1}^{k}\alpha_i^2(\sigma_i/\sigma_1)^{4(n+1)}}
{(1-\epsilon)^{-2}\sum_{i=1}^{k}\alpha_i^2(\sigma_i/\sigma_1)^{4(n+1)}+(1-\epsilon)^{-2n}},\\
&=&
\frac{\sigma_1^4}
{(1-\epsilon)^{-2}+(1-\epsilon)^{-2n}/\sum_{i=1}^{k}\alpha_i^2(\sigma_i/\sigma_1)^{4(n+1)}},\\
&\buildrel(b)\over\ge&
\frac{\sigma_1^4(1-\epsilon)^{2}}
{1+(1-\epsilon)^{-2(n+1)}/\alpha_1^2}.
}
(a) follows because for \math{i\ge k+1},
\math{\sigma_i^2<(1-\epsilon)\sigma_1^2}; for \math{i\le k},
\math{\sigma_1^2/\sigma_i^2\le (1-\epsilon)^{-2}}; 
and \math{\sum_{i\ge k+1}\alpha_i^2\le
\sum_{i\ge 1}\alpha_i^2=1}. (b) follows because 
\math{
\sum_{i=1}^{k}\alpha_i^2(\sigma_i/\sigma_1)^{4(n+1)}\ge \alpha_1^2}.
The theorem  now follows from the next lemma
by redefining  \math{\delta=(2/\pi+2)(\delta')^{1/3}}. 
\begin{lemma}\label{lemma:lemproof}
With probability at least \math{1-(2/\pi+2)(\delta')^{1/3}},
\math{\alpha_1^2\ge \delta'/d}.
\end{lemma}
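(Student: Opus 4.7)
The plan is to exploit the rotational invariance of the standard Gaussian to reduce $\alpha_1$ to a simple one-dimensional object, then control it via a union bound.

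First I would observe that $\alpha_1=\v_1\transp\x_0$, and since $\x_0=\z/\norm{\z}$ where $\z=(z_1,\ldots,z_d)$ has iid standard Normal entries, the rotational invariance of $\z$ (the distribution of $\z$ is the same as that of $\mat{U}\z$ for any orthogonal $\mat{U}$) implies that $\v_1\transp\z$ is itself a standard Normal, and is equidistributed with $z_1$. Hence $\alpha_1^2$ has the same distribution as $z_1^2/S$, where $S=z_1^2+\cdots+z_d^2$. The event $\alpha_1^2<\delta'/d$ is therefore the event $z_1^2<(\delta'/d)\,S$.

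Next I would introduce a threshold $t>0$ and use a union bound:
\mand{
\Pr\!\left[\alpha_1^2<\delta'/d\right]\;\le\;\Pr[z_1^2<t]\;+\;\Pr\!\left[S>\frac{td}{\delta'}\right].
}
The first term is bounded by bounding the Gaussian density at $0$: $\Pr[|z_1|\le\sqrt{t}]\le 2\sqrt{t}/\sqrt{2\pi}=\sqrt{2t/\pi}$. The second term is bounded by Markov's inequality, using $\E[S]=d$, to give $\Pr[S>td/\delta']\le\delta'/t$. So the total probability is at most $\sqrt{2t/\pi}+\delta'/t$.

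Finally I would choose $t$ to make both terms of comparable size; the simple choice $t=(\delta')^{2/3}$ gives a bound of $\sqrt{2/\pi}\,(\delta')^{1/3}+(\delta')^{1/3}\le (2/\pi+2)(\delta')^{1/3}$, which is exactly the desired form. (An optimal choice of $t$ would give a slightly tighter constant, but the stated constant is already clean and sufficient.) The only mild obstacle is picking $t$ so that the stated constant drops out; using the observation $\sqrt{2/\pi}+1\le 2/\pi+2$ handles this without further work. No deeper concentration inequality is needed, because Markov on the chi-square $S$ is already loose enough to balance the Gaussian-density estimate at $0$.
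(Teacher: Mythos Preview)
Your proof is correct. The reduction via rotational invariance to $\alpha_1^2\stackrel{d}{=}z_1^2/S$ is the same starting point as the paper, and your union bound with threshold $t=(\delta')^{2/3}$, followed by the density bound $\Pr[z_1^2<t]\le\sqrt{2t/\pi}$ and Markov on $S$, cleanly yields $(\sqrt{2/\pi}+1)(\delta')^{1/3}\le(2/\pi+2)(\delta')^{1/3}$.

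The paper's argument differs in its decomposition. Rather than a union bound over $\{z_1^2<t\}\cup\{S>td/\delta'\}$, it first isolates $z_1^2$ from the remaining sum by rewriting the event as $z_1^2\ge\frac{\delta'}{d-\delta'}\sum_{i\ge 2}z_i^2$, thereby obtaining two \emph{independent} chi-square variables $\chi_1^2$ and $\chi_{d-1}^2$. It then lower-bounds the success probability by the product $\Pr[\chi_1^2\ge \delta'+(\delta')^{2/3}]\cdot\Pr[\frac{\delta'}{d-1}\chi_{d-1}^2\le \delta'+(\delta')^{2/3}]$, using Chebyshev for the second factor. Your route is slightly more elementary: the union bound works without any independence, so you avoid the algebraic step that splits off $z_1^2$, and Markov on $S$ already suffices in place of Chebyshev. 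The paper's independence-based product bound would be advantageous if one wanted sharper concentration on $\chi_{d-1}^2$ (e.g., a Chernoff-type tail), but at the level of the stated constant both approaches land in the same place, and yours is the shorter path.
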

To conclude the proof, we prove Lemma~\ref{lemma:lemproof}.
It is clear that \math{\Exp[\alpha_1^2]=1/d} from isotropy.
Without loss of generality, assume \math{\v_1} is aligned with the 
\math{z_1} axis. So 
\math{\alpha_1^2=z_1^2/\sum_iz_i^2}
(\math{z_1,\ldots,z_d} are independent standard normals).
For \math{\delta'<1},
we estimate \math{\Prob[\alpha_1^2\ge\delta'/d]} as follows:
\eqan{
\Prob\left[\alpha_1^2\ge\frac{\delta'}{d}\right]
&=&\Prob\left[\frac{z_1^2}{\sum_iz_i^2}\ge\frac{\delta'}{d}\right]\\
&=&\Prob\left[{z_1^2}\ge\frac{\delta'}{d}{\sum_{i\ge 1}z_i^2}\right]\\
&=&\Prob\left[{z_1^2}\ge\frac{\delta'}{d-\delta'}{\sum_{i\ge2}z_i^2}\right]\\
&\ge&\Prob\left[{z_1^2}\ge\frac{\delta'}{d-1}{\sum_{i\ge2}z_i^2}\right]\\
&\buildrel(a)\over=&\Prob\left[\chi^2_1\ge\frac{\delta'}{d-1}\chi^2_{d-1}\right],\\
&\buildrel(b)\over\ge&\Prob\left[\chi^2_1\ge \delta'+(\delta')^{2/3}\right]
\times\Prob\left[
\frac{\delta'}{d-1}\chi^2_{d-1}\le \delta'+(\delta')^{2/3}\right].\\
}
In (a) we compute the probability that a \math{\chi^2_1} random variable
exceeds a multiple of an independent \math{\chi^2_{d-1}} random variable, which
follows from the definition of the \math{\chi^2} distribution 
as a sum of squares
of 
independent standard normals.
(b) follows from independence and 
because one particular realization of the event in (a) is when
\math{\chi^2_1\ge \delta'+(\delta')^{2/3}} and 
\math{\delta'\chi^2_{d-1}/(d-1)\le \delta'+(\delta')^{2/3}}.
Since \math{\Exp[\chi^2_{d-1}/(d-1)]=1}, and 
\math{Var[\chi^2_{d-1}/(d-1)]=2/(d-1)}, by Chebyshev's inequality,
\mand{
\Prob\left[
\frac{\delta'}{d-1}\chi^2_{d-1}\le \delta'+(\delta')^{2/3}\right]
\ge 1-\frac{2(\delta')^{1/3}}{d-1}.
}
From the definition of the \math{\chi^2_1} distribution, we can bound
\math{\Prob[\chi^2_1\le\delta'+(\delta')^{2/3}]},
\mand{
\Prob[\chi^2_1\le\delta'+(\delta')^{2/3}]
=
\frac{1}{2^{1/2}\Gamma(1/2)}
\int_{0}^{\delta'+(\delta')^{2/3}}du\ u^{-1/2}e^{-u/2}
\le \sqrt{\frac{2}{\pi}}(\delta'+(\delta')^{2/3})^{1/2},
}
and so 
\mand{
\Prob\left[\alpha_1^2\ge\frac{\delta'}{d}\right]
\ge
\left(1-\sqrt{\frac{2}{\pi}}(\delta'+(\delta')^{2/3})^{1/2}\right)
\cdot
\left(1-\frac{2(\delta')^{1/3}}{d-1}\right)
\ge1-\left(\frac2\pi+2\right)(\delta')^{1/3}
.
}
\end{proof}

{
\bibliographystyle{natbib}
\bibliography{active,mypapers,masterbib} 
}


\end{document}